\newcommand{\etal}{\text{et al. }}
\newcommand{\Ded}{{D_\text{Edit}}}
\newcommand{\Dsg}{{D_\text{Sg}}}
\newcommand{\Dbot}{{D_\text{Bot}}}
\newcommand{\Dbotsg}{{D_\text{BotSg}}}
\newcommand{\M}{{\mathcal{M}}}
\newcommand{\Mtai}{{\mathcal{M}^\text{Tai}}}
\newcommand{\Msg}{{\mathcal{M}^\text{Sg}}}
\newcommand{\Mbot}{{\mathcal{M}^\text{Bot}}}
\newcommand{\Mbotsg}{{\mathcal{M}^\text{BotSg}}}
\newtheorem{theorem}{Theorem}
\newtheorem{lemma}{Lemma}
\theoremstyle{definition}
\newtheorem{definition}{Definition}
\begin{document}
\title{Improved Methods for Computing Distances between Unordered Trees Using Integer Programming}

\author[1]{Eunpyeong Hong\thanks{ephong93@gmail.com}}
\author[2]{Yasuaki Kobayashi\thanks{kobayashi@iip.ist.i.kyoto-u.ac.jp}}
\author[2]{Akihiro Yamamoto\thanks{akihiro@i.kyoto-u.ac.jp}}
\affil[1]{Kyoto University}
\affil[2]{Graduate Institute of Informatics, Kyoto University}
\date{}
\maketitle
\begin{abstract}
Kondo \etal (DS 2014) proposed methods for computing distances between unordered rooted trees by transforming an instance of the distance computing problem into an instance of the integer programming problem. They showed that the tree edit distance, segmental distance, and bottom-up segmental distance problem can be respectively transformed into an integer program which has $O(nm)$ variables and $O(n^2m^2)$ constraints, where $n$ and $m$ are the number of nodes of input trees. In this work, we propose new integer programming formulations for these three distances and the bottom-up distance by applying dynamic programming approach. We divide the tree edit distance problem into $O(nm)$ subproblems each of which has only $O(n + m)$ constraints. For the other three distances, each subproblem can be reduced to a maximum weighted matching problem in a bipartite graph which can be solved in polynomial time. In order to evaluate our methods, we compare our method to the previous one due to Kondo \etal The experimental results show that the performance of our methods have been improved remarkably compared to that of the previous method.
\end{abstract}
\section{Introduction}
In machine learning applications, it is important to compare (dis)similarities between tree-structured data such as XML and RNA secondary structures. 
There are many measures of similarities between two trees. The tree edit distance \cite{tai} is one of the most widely used measures, which is defined as the minimum cost of edit operations to 
transform a tree into another. It is equivalent to finding the maximum cost of a Tai mapping between two trees.
However, the tree edit distance may not be appropriate to use in some applications where structure-sensitivity is required. 
In this context, many variants of Tai mapping have been proposed (see \cite{kuboyama}, for example). 
In this study, four measures are covered including the edit distance, segmental distance \cite{seg},
bottom-up segmental distance \cite{seg} and bottom-up distance \cite{bot}.

It is known that most of distances between ordered rooted trees can be computed in polynomial time.
For example, Tai \cite{tai} showed that the tree edit distance between ordered rooted trees can be computed in $O(n^3m^3)$ time, where $n$ and $m$ are the number of nodes of input trees, 
and Demaine \etal \cite{demaine} improved the running time to $O(nm^2(1+\log\frac{n}{m}))$. However, if input trees are unordered, the problems of computing the above four distances are known to
be not only NP-hard \cite{ted-np-hard}, but also MAX SNP-hard \cite{seg,bot,ted-max-snp-hard}.
Akutsu \etal studied the tree edit distance problem between unordered trees from a theoretical algorithmic perspective.
They gave an approximation algorithm and exact algorithms \cite{akutsuapprox,akutsufpt,akutsu}.
From the practical point of view, many researches have been done so far.
Horesh \etal \cite{horesh} proposed an A$^*$ algorithm to solve this problem for unlabeled unordered trees and Higuchi \etal \cite{higuchi} extended it for labeled trees.
Fukagawa \etal \cite{fukagawa} proposed a method to reduce the edit distance problem into the maximum vertex weighted clique problem which can be solved by an algorithm due to \cite{tomita}. They showed that the clique-based method is as fast as A*-based method. 
Mori \etal \cite{mori} improved it by applying a dynamic programming approach. They showed that their method is faster than the previous clique-based method. 
Kondo \etal \cite{kondo} proposed a method to reduce an instance of the edit distance problem into an instance of integer linear programming (IP) problem
with $O(nm)$ variables and $O(n^2m^2)$ constraints, where $n$ and $m$ are the number of nodes of input trees, respectively.
However, the instance of their IP formulation has a large number of constraints and hence their method may not be applicable to moderate-sized instances.
Although they showed that their method is faster than the clique-based method of Mori \etal \cite{mori} 
when input trees have large degree nodes, their IP-based method is not very effective when input trees have no large degree nodes or the size of tree is large.

An advantage of IP-based method is that we can easily make an IP formulation representing variations of the edit distance by adding some additional constraints.
In fact, Kondo \etal showed IP formulations which represent segmental distance and bottom-up segmental distance by adding appropriate constraints.
Another advantage of this method is that we can use state-of-the-art IP solvers (e.g. Gurobi, CPLEX), which can quickly solve many hard problems.

In this paper, we propose improved methods to compute the edit distance, segmental distance, bottom-up segmental distance and bottom-up distance between unordered rooted trees.
The improvement of computational efficiency is obtained by applying a dynamic programming approach due to \cite{mori}.
However, it is not only sufficient to apply the dynamic programming but it is necessary to use a structural property of rooted trees.
Their dynamic programming with this property allows us to drastically reduce the number of constraints in our IP formulations for the above distances.
For the edit distance problem, our method has to solve $O(nm)$ subproblems each of which has only $O(n + m)$ constraints. 
For the other distances, each subproblem except the problem of combining the solutions of subproblems can be reduced to the maximum weighted matching problem in a bipartite graph,
which can be solved in polynomial time using the Hungarian method \cite{hungarian}.

The rest of the paper is organized as follows. 
We give notations and preliminary results in Sect. \ref{sec:preli} and briefly explain the previous method in Sect. \ref{sec:previous}.
In Sect. \ref{sec:ours}, we introduce our new methods. 
In order to evaluate our methods, we implemented previous and our methods and conducted experiment
using Glycan dataset \cite{kegg} and CSLOGS dataset \cite{zaki}. The results of our experiments are shown in Sect. \ref{sec:experiment}.
Finally, we conclude our paper with some discussions.
\section{Preliminaries}\label{sec:preli}
Let $T$ be a rooted tree. The root of $T$ is denoted by $r(T)$.
In this paper, we simply write $T$ to represent the set of nodes of $T$.
For $x, y \in T$, $x \le y$ means that $x$ is on the unique path between the root and $y$.
If $x \le y$ and $x \neq y$, we write $x < y$ and say that $x$ is an {\em ancestor} of $y$ and $y$ is a {\em descendant} of $x$.
It is easy to see that the relation $\le$ is a partial order on $T$.
A {\em parent} of $x$, denoted by $p(x)$, is the closest ancestor of $x$.
The {\em children} of $x$, denoted by $C(x)$, is the set of the closest nodes to $x$ among the all descendants of $x$.
We call the number of children of $x$ the {\em degree} of $x$. A node $x$ is called a leaf if it has no children. The set of all leaves of a tree $T$ is denoted by $L(T)$.
Nodes $x$ and $y$ are {\em siblings} if they has the same parent.
A tree is called unordered tree if there is no order between siblings.
Let $\Sigma$ be a finite alphabet and $l_T: T \rightarrow \Sigma$ a labeling function.
A tuple $(T,l_T)$ is called a {\em labeled} tree.
For $x \in T$, we use $T(x)$ to denote the subtree of $T$ rooted at $x$.
For notational convenience, we simply write $T - x$ to denote the subgraph of $T$ obtained by removing a node $x$.

\subsection{Tree Edit Distance}
The tree edit distance between two trees is defined as the minimum cost of {\em edit operations} to transform a tree into another.
\begin{definition}[Edit Operations]
Let T be a tree. {\em Edit operations} on T consist of the following three operations.
\end{definition}
\begin{description}
\item [Substitution] Replace the label of a node in $T$ with a new label.
\item [Deletion] Delete a non-root node $t$ of $T$, making all children of $t$ be the children of $par(t)$.
\item [Insertion] Insert a new node $t$ as a child of some node $v$ in $T$, making some children of $v$ be the children of $t$.
\end{description}
Let $\Sigma_{\varepsilon} = \Sigma \cup \{\varepsilon\}$, where $\varepsilon$ is a blank symbol not in $\Sigma$.
In order to describe costs on edit operations, we denote each of the edit operations by a pair in $\Sigma_\varepsilon \times \Sigma_\varepsilon\setminus \{(\varepsilon,\varepsilon)\}$.
Substituting a node labeled with $a$ by another node labeled with $b$ is denoted by $(a, b)$. Inserting a node labeled with $b$ is denoted by $(\varepsilon, b)$. Deleting a node labeled with $a$ is denoted by $(a, \varepsilon)$.
Let $d : \Sigma_\varepsilon \times \Sigma_\varepsilon \setminus \{(\varepsilon, \varepsilon)\}\rightarrow \mathbb{R^+}$ be a cost function on edit operations and assume, in this paper, that $d$ is a metric. In the following, we simply write $d(x, y)$ for $(x, y)\in S\times T$
to represent $d(l_1(x), l_2(y))$, where $l_1$ and $l_2$ are labeling functions on two trees $T_1$ and $T_2$, respectively.

Let $E=\langle e_1, e_2, \ldots, e_t\rangle$ be a sequence of edit operations, where $e_i = (a_i, b_i)$ for $a_i, b_i \in \Sigma_{\varepsilon}$.
The cost of the sequence is defined as $ cost(E) = \sum_{1 \le i \le t}d(e_i)$.
\begin{definition}[Tree Edit Distance \cite{tai}]
Let $T_1$ and $T_2$ be trees and $\mathcal{E}(T_1, T_2)$ be the set of all sequences of edit operations which transform $T_1$ into $T_2$.
The {\em tree edit distance} between $T_1$ and $T_2$, denoted by $\Ded(T_1, T_2)$, is defined as $\Ded(T_1, T_2) = \min_{E\in \mathcal{E}(T_1, T_2)}cost(E)$
\end{definition}
A {\em mapping} between $T_1$ and $T_2$ is a subset of $T_1 \times T_2$.
The set of nodes that belongs to a mapping $M$ is denoted by $V(M)$.
Tai \cite{tai} gave a combinatorial characterization of the tree edit distance by means of a mapping, which is called a {\em Tai mapping}.
\begin{definition}[Tai Mapping \cite{tai}]
Let $T_1$ and $T_2$ be trees. A mapping $M$ is  called a {\em Tai mapping} if it satisfies
the following constraints for every $(x, y),(x', y')$ in $M$:
\begin{description}
\item [One-to-one correspondence : ] $x = x' \Leftrightarrow y = y'$,
\item [Preserving ancestor-descendant relationship: ] $x < x' \Leftrightarrow y < y'$.
\end{description}
\end{definition}
The cost of a Tai mapping $M$ is defined as 
\begin{equation*}
\displaystyle cost(M) = \sum_{(x, y) \in M}d(x, y) + 
\sum_{x \in T_1 \setminus V(M)}d(x, \varepsilon) +
\sum_{y \in T_2 \setminus V(M)} d(\varepsilon, y).
\end{equation*}
Let $\Mtai(T_1, T_2)$ be the set of all Tai mappings between $T_1$ and $T_2$. Tai \cite{tai} showed the following theorem.
\begin{theorem}[\cite{tai}]
For two trees $T_1$ and $T_2$, $\displaystyle\Ded(T_1, T_2) = \min_{M \in \Mtai(T_1, T_2)} cost(M)$.
\end{theorem}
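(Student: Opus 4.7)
The plan is to establish the identity by proving the two standard inequalities $\Ded(T_1,T_2)\le \min_M \cost(M)$ and $\Ded(T_1,T_2)\ge \min_M \cost(M)$, where $M$ ranges over $\Mtai(T_1,T_2)$. Both directions rely on the observation that a single edit operation neither creates nor destroys ancestor--descendant relations among the surviving (non-edited) nodes: deletion of $t$ reparents children of $t$ to $p(t)$ but leaves the $\le$ relation unchanged on $T\setminus\{t\}$; insertion is the symmetric operation; and substitution only relabels. This invariant is what makes the two notions match.

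For $\Ded(T_1,T_2)\ge \min_M \cost(M)$, I would take an optimal edit sequence $E=\langle e_1,\ldots,e_t\rangle$ and trace identities of nodes through the intermediate trees. Each node of $T_1$ is either deleted at some step or survives (possibly after several substitutions) to become a unique node of $T_2$; dually, each node of $T_2$ is either inserted or is the image of some node of $T_1$. Define $M\subseteq T_1\times T_2$ to consist of all pairs $(x,y)$ such that $x$ survives and ends up as $y$. One-to-one correspondence is immediate from the construction. For ancestor--descendant preservation, I would argue by induction on the length of $E$, using the invariant above to show that if $x<x'$ in $T_1$ and both survive to $y,y'$, then $y<y'$ in $T_2$, and conversely. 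The cost inequality $\cost(E)\ge \cost(M)$ follows because the contribution of each surviving node $x\mapsto y$ is at least $d(x,y)$ by the triangle inequality applied to the chain of substitutions along its trajectory, while deletions and insertions contribute exactly the unmatched terms in $\cost(M)$.

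For $\Ded(T_1,T_2)\le \min_M \cost(M)$, I would take an optimal Tai mapping $M$ and exhibit an explicit edit sequence of cost at most $\cost(M)$. A convenient order is: first delete every node of $T_1\setminus V(M)$ in reverse BFS order (so that a node is deleted only after its unmatched descendants already are), then perform substitutions for every pair $(x,y)\in M$ with differing labels, then insert every node of $T_2\setminus V(M)$ in BFS order. The ancestor--descendant condition on $M$ is exactly what is needed to guarantee that, after the deletion phase, the partial tree embedded in a copy of $T_2$ has the right structure for the insertion phase to be legal. The costs sum to $\cost(M)$ by construction.

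The main obstacle is the ancestor--descendant bookkeeping in both directions: verifying that the partial order is preserved through an arbitrary edit sequence, and conversely that a mapping satisfying the Tai constraints can always be realized by a feasible ordering of deletions and insertions. I would isolate this as a lemma about a single edit step and then iterate. Everything else (one-to-one correspondence, additivity of costs, use of the metric property of $d$ to collapse chains of substitutions) is bookkeeping once the structural lemma is in place.
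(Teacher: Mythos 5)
The paper does not prove this statement: it is Tai's classical characterization, quoted with a citation to \cite{tai}, so there is no in-paper argument to compare against. Your sketch is the standard proof of that classical result and is sound in outline: the single-step invariance of the ancestor--descendant relation on surviving nodes gives the mapping extracted from an edit script the Tai properties, the metric assumption on $d$ (which the paper does impose) collapses chains of substitutions to justify $\cost(E)\ge cost(M)$, and the delete--substitute--insert script realizes any Tai mapping at its cost. Two small points deserve care if you were to write this out against this paper's exact definitions. First, ``deletions and insertions contribute exactly the unmatched terms'' should be ``at least'': a node may be relabelled several times before being deleted, and the triangle inequality only gives $\ge d(x,\varepsilon)$ for its total contribution, which is all you need. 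Second, deletion is defined here only for non-root nodes and insertion only adds non-root nodes, so every edit script forces $r(T_1)$ to evolve into $r(T_2)$, whereas a Tai mapping may leave the roots unmatched or match a root to a non-root; the standard fix is to show (using that $d$ is a metric) that some cost-minimizing Tai mapping can be assumed to respect the roots, or equivalently to work with trees augmented by a common virtual root. Neither issue changes the truth of the statement, but the second is the one place where your structural lemma about a single edit step does not by itself close the argument.
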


\subsection{Variants of Edit Distance}\label{sec:variants}
The tree edit distance is one of the most widely used to measure a similarity between two trees. However, it may not be appropriate for some applications  because one may need a distance on which some specific structure of trees is reflected.
Many variants of the tree edit distance have been proposed in the literature \cite{seg,bot}.
We work on the following three variants, which are defined by mappings rather than edit operations.
\begin{definition}[Segmental Mapping \cite{seg}]
Let $T_1$ and $T_2$ be trees. A Tai mapping $M$ between $T_1$ and $T_2$ is called a {\em segmental mapping}
if for any $(x, y), (x', y') \in M$ with $x < x'$ and $y < y'$, $(p(x'), p(y')) \in M$.
\end{definition}
\begin{definition}[Bottom-up Segmental Mapping \cite{seg}]
Let $T_1$ and $T_2$ be trees. A segmental mapping $M$ between $T_1$ and $T_2$ is called a {\em bottom-up segmental mapping} 
if for any $(x, y) \in M$, there is $(x', y') \in M$ such that $x', y'$ are leaves with $x \le x'$ and $y \le y'$.
\end{definition}
\begin{definition}[Bottom-up Mapping \cite{bot}]\label{def:bot}
Let $T_1$ and $T_2$ be trees. A Tai mapping $M$ between $T_1$ and $T_2$ is called a {\em bottom-up mapping}
if for any $(x, y) \in M$, the submapping obtained from $M$ by restricting to $C(x) \times C(y)$ forms a bijection between $C(x)$ and $C(y)$.
\end{definition}
Let us note that the condition in Definition~\ref{def:bot} can be restated in the following way:
M is a bottom-up mapping if for any $(x, y) \in M$, the submapping obtained from $M$ by restricting to $T_1(x) \times T_2(y)$ is an isomorphism mapping, ignoring the label information.
\begin{definition}[\cite{seg,bot}]
Let $T_1$ and $T_2$ trees. Denote the sets of all possible segmental mappings, bottom-up segmental mappings, and bottom-up mappings between $T_1$ and $T_2$
by $\Msg(T_1,T_2),\Mbotsg(T_1,T_2)$, and $\Mbot(T_1,T_2)$, respectively.
The {\em segmental distance}, {\em bottom-up segmental distance}, and {\em bottom-up distance} between $T_1$ and $T_2$, which are denoted by $\allowbreak \Dsg(T_1,T_2), \Dbotsg(T_1,T_2)$,
and $\Dbot(T_1,T_2)$ respectively, are defined as follows:
\begin{align}
&\Dsg(T_1, T_2) = \min_{M\in \Msg(T_1, T_2)} cost(M)\nonumber\\
&\Dbotsg(T_1, T_2) = \min_{M \in \Mbotsg(T_1, T_2)} cost(M)\nonumber\\
&\Dbot (T_1, T_2) = \min_{M \in \Mbot(T_1, T_2)} cost(M).\nonumber
\end{align}
\end{definition}

\section{Previous Method \cite{kondo}}\label{sec:previous}
In the rest of this paper, fix input trees $T_1$ and $T_2$, and
let $n = |T_1|$ and $m = |T_2|$.
Kondo \etal \cite{kondo} proposed an integer linear programming formulation for the tree edit distance.
For the tree edit distance between $T_1$ and $T_2$, we introduce a binary variable $m_{x, y}$ for every $(x, y) \in T_1 \times T_2$
which takes value 1 if and only if $(x, y) \in \Mtai(T_1,T_2)$.
Then, we can reformulate the cost of a Tai mapping $M$ as:
\begin{eqnarray*}
cost(M) &=& \sum_{(x, y) \in M} d(x, y)+\sum_{x \in T_1 \setminus V(M)} d(x, \varepsilon) + 
\sum_{y \in T_2 \setminus V(M)}d(\varepsilon, y) \\
&=&\sum_{(x, y) \in T_1 \times T_2} d(x, y) m_{x, y} + \sum_{x \in T_1}d(x, \varepsilon) \left\{1 - \sum_{y \in T_2}m_{x,y} \right\}
+ \sum_{y \in T_2}d(\varepsilon, y)\left\{1 - \sum_{x \in T_1}m_{x, y}\right\}\\
&=&\sum_{(x, y)\in T_1 \times T_2}\left\{ d(x,y)-d(x,\varepsilon)-d(\varepsilon,y) \right\}m_{x, y}+\sum_{x \in T_1}d(x,\varepsilon)+
\sum_{y \in T_2} d(\varepsilon, y).
\end{eqnarray*}
The two constraints of Tai mapping are directly formulated as the following inequalities:
\begin{eqnarray*}
\sum_{y \in T_2}m_{x, y}\le 1 &&\text{ for all } x \in T_1,\\
\sum_{x \in T_1}m_{x, y}\le 1 &&\text{ for all } y \in T_2,\\
m_{x, y}+m_{x', y'} \le 1 &&\text{ for all }(x, y), (x', y')\in T_1 \times T_2 \text{ s.t. } x < x' \not\Leftrightarrow y < y'.
\end{eqnarray*}
The first two constraints are equivalent to the one-to-one correspondence of Tai mapping. It means that for any node $x \in T_1$ (resp. $y \in T_2$),
at most one node of $T_2$ (resp. $T_1$) is allowed to be paired.
The third constraint is equivalent to the ancestor-descendant preservation.
It means that for any two pairs which do not preserve the ancestor-descendant relationship, both of them cannot be included in $M$ simultaneously.
This formulation contains $O(nm)$ variables and $O(n^2m^2)$ constraints.

Kondo \etal also gave IP formulations for the segmental distance and bottom-up segmental distance.
These distances can be formulated by imposing additional constraints on the above formulation.
In regard of the segmental mapping, the constraints of segmental mapping can be represented as follows:
\begin{eqnarray*}
\begin{array}{l}
m_{x,y}+m_{x',y'}\le m_{p(x'),p(y')}+1, \text{ for all }(x,y),(x',y')\in T_1 \times T_2 \text{ s.t. }x<x' \text{ and }y<y'.
\end{array}
\end{eqnarray*}
The constraints of bottom-up segmental mapping can also be represented as follows:
\begin{eqnarray*}
\begin{array}{l}\displaystyle
m_{x,y}\le \sum_{\substack{x' \in L(T_1(x)),\\ y' \in L(T_2(y))}}m_{x', y'},
\text{ for all }(x,y)\in T_1\times T_2 \text{ s.t. } x\notin L(T_1) \text{ and } y \notin L(T_2).
\end{array}
\end{eqnarray*}
The above two formulations also contain $O(nm)$ variables and $O(n^2m^2)$ constraints.

\section{Improved Method}\label{sec:ours}
\subsection{Improved Method for Tree Edit Distance}
In this section, we propose a new IP formulation for the edit distance problem by combining a dynamic programming approach due to \cite{mori}.
The dynamic programming computes a minimum cost Tai mapping $M_{x, y}$ between $T_1(x)$ and $T_2(y)$ with $(x, y) \in M_{x, y}$ for $(x,y) \in T_1\times T_2$ in a bottom-up manner.
Once we have the solutions for all pairs $(x, y) \in T_1 \times T_2$, we can construct a minimum cost Tai mapping between $T_1$ and $T_2$.

First, we modify the objective function
\begin{equation*}
\text{minimize}\sum_{(x, y)\in T_1 \times T_2}\{d(x, y) - d(x, \varepsilon) - d(\varepsilon, y)\}m_{x, y}+\sum_{x \in T_1}d(x,\varepsilon) + \sum_{y \in T_2}d(\varepsilon, y)
\end{equation*}
to
\begin{equation*}
\text{maximize}\sum_{(x, y)\in T_1 \times T_2}w_{x, y}m_{x, y},
\end{equation*}
where $w_{x, y} = d(x, \varepsilon) + d(\varepsilon, y) - d(x, y)$.
This modification is valid since the second and third terms do not affect the minimization.

Since the solution of our subproblem for $T_1(x)$ and $T_2(y)$ must contain the root pair $(x, y)$,
the objective function on the input trees $T_1(x)$ and $T_2(y)$ can be represented as
\begin{equation}
\begin{array}{lll}
\text{maximize}  \displaystyle \sum_{(x', y') \in (T_1(x) - x) \times (T_2(y) - y)} w_{x', y'}m_{x', y'}+w_{x, y}\label{eq:obj}.
\end{array}
\end{equation}
We denote by $W_{x, y}$ the maximum value of (\ref{eq:obj}).
If at least one of $x$ and $y$ is a leaf, $W_{x, y} = w_{x, y}$.
Thus, in the following, we assume that neither $x$ nor $y$ is a leaf. 
The idea for our dynamic programming is that $W_{x, y}$ can be recursively computed from the values $W_{x', y'}$ for $x < x'$ and $y < y'$.
To be precise, let $\M^*(T_1(x), T_2(y))$ be the set of all Tai mappings $M$ between $T_1(x)$ and $T_2(y)$ such that $(x, y) \notin M$ and
both $T_1 \cap V(M)$ and $T_2 \cap V(M)$ are antichains in $(T_1(x), \le)$ and $(T_2(y), \le)$, respectively.
For a Tai mapping $M$, we let $w(M)$ and $W(M)$ to denote $\sum_{(x, y) \in M}w_{x, y}$ and $\sum_{(x, y) \in M}W_{x, y}$, respectively.
The following lemma is a key ingredient of our formulation.
\begin{lemma}\label{lem:key}
    $\displaystyle W_{x, y} = \max_{M \in \M^*(T_1(x), T_2(y))} W(M) + w_{x, y}$.
\end{lemma}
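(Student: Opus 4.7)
The plan is to prove the two inequalities separately, in both cases by exhibiting an explicit correspondence between Tai mappings of $T_1(x), T_2(y)$ that contain $(x, y)$ and the ``top-level'' mappings $M \in \M^*(T_1(x), T_2(y))$ augmented by optimal sub-solutions for each pair in $M$. The underlying intuition is that an optimal Tai mapping $M^\star$ for $W_{x, y}$, once we strip off the root pair $(x, y)$, decomposes into disjoint pieces rooted at its minimal matched descendants, and these minimal matched descendants are precisely the elements of some $M \in \M^*(T_1(x), T_2(y))$.

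For the direction $W_{x, y} \ge w_{x, y} + W(M)$, I would fix an arbitrary $M \in \M^*(T_1(x), T_2(y))$ and, for each $(x', y') \in M$, choose an optimal Tai mapping $M^*_{x', y'}$ between $T_1(x')$ and $T_2(y')$ containing $(x', y')$ with $w(M^*_{x', y'}) = W_{x', y'}$. I then form $M' = \{(x, y)\} \cup \bigcup_{(x', y') \in M} M^*_{x', y'}$ and verify that $M'$ is a Tai mapping of $T_1(x), T_2(y)$. Because the $T_1$-projection of $M$ is an antichain disjoint from $\{x\}$, the subtrees $T_1(x')$ for $(x', y') \in M$ are pairwise disjoint and exclude $x$ (analogously on the $T_2$ side); this yields the one-to-one property and also gives $w(M') = w_{x, y} + W(M)$ as a disjoint-union weight sum. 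Taking the maximum over $M$ then establishes this inequality.

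For the reverse direction, I would start from an optimal Tai mapping $M^\star$ with $w(M^\star) = W_{x, y}$ and $(x, y) \in M^\star$, and distill an $A \in \M^*(T_1(x), T_2(y))$ by taking those pairs of $M^\star \setminus \{(x, y)\}$ whose $T_1$-coordinate is minimal (with respect to the ancestor order) in the $T_1$-projection of $M^\star \setminus \{(x, y)\}$. Tai's ancestor-preservation axiom forces the corresponding $T_2$-coordinates to be minimal as well, so $A$ has antichain projections on both sides and does not contain $(x, y)$. Every remaining pair $(u, v) \in M^\star \setminus \{(x, y)\}$ has a unique topmost matched ancestor $(x', y') \in A$ with $x' \le u$, and Tai gives $y' \le v$, hence $(u, v) \in T_1(x') \times T_2(y')$. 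Thus $M^\star \cap (T_1(x') \times T_2(y'))$ is a Tai mapping between $T_1(x')$ and $T_2(y')$ containing $(x', y')$, whose weight is at most $W_{x', y'}$. Summing over $A$ gives $w(M^\star) - w_{x, y} \le W(A) \le \max_M W(M)$.

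The main obstacle is the Tai-mapping verification in the $\ge$ direction, specifically ancestor-descendant preservation for two pairs drawn from distinct components $M^*_{x'_1, y'_1}$ and $M^*_{x'_2, y'_2}$. Here I would argue that if $u \in T_1(x'_1)$ were a proper ancestor of $u' \in T_1(x'_2)$, then $x'_1$ and $x'_2$ would both lie on the root-to-$u'$ path and so be comparable, contradicting the antichain hypothesis on $M$; within a single component $M^*_{x', y'}$ the Tai property is inherited, and for pairs involving the root $(x, y)$ the relations $x < x'$ and $y < y'$ make the preservation immediate. The leaf case is already excluded by the assumption preceding the lemma, so no separate boundary argument is needed.
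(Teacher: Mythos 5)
Your proposal is correct and takes essentially the same approach as the paper: the lower bound by gluing optimal sub-mappings $M^*_{x',y'}$ onto an antichain mapping $M \in \M^*(T_1(x), T_2(y))$ together with $(x,y)$, and the upper bound by decomposing an optimal Tai mapping at its minimal (topmost) matched pairs, whose projections form antichains by ancestor preservation. Your explicit verification that the glued union satisfies the Tai conditions across components is a detail the paper leaves implicit, not a different argument.
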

\begin{proof}
We first show that the left-hand side is at most the right-hand side.
Let $M$ be a Tai mapping between $T_1(x)$ and $T_2(y)$ with $(x, y) \in M$.
Then, $M$ can be uniquely decomposed into $(x, y), M_{x_1, y_1}, M_{x_2, y_2}, \ldots, M_{x_k, y_k}$ such that
for any $1 \le i \le k$, $M_{x_i, y_i}$ is a Tai mapping between $T_1(x_i)$ and $T_2(y_i)$ with $(x_i, y_i) \in M_{x_i, y_i}$ and $\bigcup_{1 \le i \le k}(x_i,y_i) \in \M^*(T_1(x), T_2(y))$.
Such a decomposition can be obtained by choosing minimal node pairs $(x_i, y_i) \in M \setminus \{(x, y)\}$ with respect to $\le$: For any $(x', y') \in M$
either $x_i \le x'$ and $y_i \le y'$, or $x_i$ and $y_i$ are not comparable to $x'$ and $y'$, respectively. 
For each $1 \le i \le k$, we have $w(M_{x_i, y_i}) \le W_{x_i, y_i}$.
Therefore, $W_{x, y} \le \sum_{1\le i\le k}W_{x_i, y_i} + w_{x, y} = \max_{M \in \M^*(T_1(x), T_2(y))} W(M) + w_{x, y}$.

To show the converse, let $M^* \in \M^*(T_1(x), T_2(y))$ be maximizing the right-hand side.
For each $(x', y') \in M^*$, we let $M_{x', y'}$ be a Tai mapping between $T_1(x')$ and $T_2(y')$ such that $W_{x', y'} = w(M_{x', y'})$ and $(x', y') \in M_{x', y'}$.
Since $T_1(x) \cap V(M^*)$ and $T_2(y) \cap V(M^*)$ are antichains,
$\bigcup_{(x', y') \in M^*} M_{x', y'} \cup \{(x, y)\}$ is a Tai mapping between $T_1(x)$ and $T_2(y)$.
Therefore, we have $\max_{M \in \M^*(T_1(x), T_2(y))} W(M) + w_{x, y} \le \sum_{(x', y') \in M^*} w(M_{x', y'}) + w_{x, y} \le W_{x, y}$ and hence the lemma holds.
\qed\end{proof}

By Lemma~\ref{lem:key}, our problem is to maximize
\begin{equation*}
\sum_{(x', y') \in M} W_{x', y'}m_{x', y'}+w_{x, y}
\end{equation*}
subject to $M \in \M^*(T_1(x), T_2(y))$.

Mori \etal \cite{mori} reduced the problem of finding a maximum weight Tai mapping in $\M^*(T_1(x), T_2(y))$
to the maximum vertex weight clique problem, which corresponds to the maximum weight independent set problem on complement graphs.
Their reduction can be interpreted as the following constraint:
\begin{eqnarray*}
\begin{array}{l}
m_{x', y'} + m_{x'',y''} \le 1 \text{ for all }(x', y'), (x'', y'')\in T_1(x) \times T(y) \text{ s.t. }x' < x'' \text{ or } y' < y''.
\end{array}
\end{eqnarray*}
However, this formulation contains $\Omega(n^2m^2)$ constraints.

In order to reduce the number of constraints, we will exploit a structure of rooted trees.
For a node $x \in T$ and a leaf $l \in L(T(x))$, let $P^T_{xl}$ be the unique path between $x$ and $l$ in $T$.
Then, for any $M \in \M^*(T_1(x), T_2(y))$ and any $l \in L(T_1(x))$ (resp. $l \in L(T_2(y))$), at most one node of $P^{T_1}_{xl}$ (resp. $P^{T_2}_{yl}$) can be chosen in $M$, that is,
\begin{equation*}
\begin{array}{ll}
    \displaystyle\sum_{x' \in P^{T_1}_{xl} - x}\sum_{y' \in T_2(y)}m_{x', y'} \le 1 & \text{ for all } l \in L(T_1(x)), \\
    \displaystyle\sum_{y' \in P^{T_2}_{yl} - y}\sum_{x' \in T_1(x)}m_{x', y'} \le 1 & \text{ for all } l \in L(T_2(y)).
\end{array}
\end{equation*}
This is formalized by the following lemma.

\begin{lemma}\label{lem:ip}
Let $x \in T_1$ and $y \in T_2$.
Then, $W_{x, y}$ can be computed by the following IP.
\begin{eqnarray*}
\begin{array}{lll}
\text{maximize} & \displaystyle\sum_{x' \in T_1(x) - x, y' \in T_2(y) - y}W_{x', y'}m_{x', y'} + w_{x, y}&\\
\text{subject to} & \displaystyle\sum_{x' \in P^{T_1}_{xl} - x}\sum_{y' \in T_2(y) - y}m_{x', y'} \le 1 & \text{ for all } l \in L(T_1(x))\\
& \displaystyle\sum_{y' \in P^{T_2}_{yl} - y}\sum_{x' \in T_1(x)}m_{x', y'} \le 1 & \text{ for all } l \in L(T_2(y))\\
& m_{x', y'} \in \{0, 1\}& \text{ for all } x' \in T_1(x) - x, y' \in T_2(y) - y.
\end{array}
\end{eqnarray*}
\end{lemma}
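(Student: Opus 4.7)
The plan is to argue that the feasible solutions of the IP are in bijection with the mappings in $\M^*(T_1(x), T_2(y))$, with the objective value of a feasible solution matching $W(M) + w_{x,y}$; combined with Lemma~\ref{lem:key}, this yields the claim. The bijection is the obvious one: $m_{x', y'} = 1 \iff (x', y') \in M$. I would then check the forward direction (any $M \in \M^*$ gives a feasible point of the stated value) and the backward direction (any feasible point defines such an $M$).

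For the forward direction, I use that $V(M) \cap T_1$ is an antichain in $(T_1(x), \le)$ by definition of $\M^*$, so every root-to-leaf path $P^{T_1}_{xl}$ meets $V(M) \cap T_1$ in at most one node, and by one-to-one that node contributes to at most one pair of $M$. This gives the first family of constraints immediately; the second family is symmetric. The objective evaluates to $\sum_{(x',y') \in M} W_{x', y'} + w_{x, y} = W(M) + w_{x, y}$ by construction.

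For the backward direction, the key claim is that any feasible $\{m_{x', y'}\}$ defines a mapping $M \in \M^*$. First, $V(M) \cap T_1$ must be an antichain in $T_1(x)$: if $x'_1 < x'_2$ were both used, a leaf $l$ in $T_1(x'_2)$ would give a path $P^{T_1}_{xl}$ containing both, violating the first constraint for $l$. The analogous argument with the second family handles $V(M) \cap T_2$. One-to-one correspondence is likewise forced: a conflict $(x', y'_1), (x', y'_2) \in M$ with $y'_1 \ne y'_2$ is caught by the first constraint applied to any leaf at or below $x'$, and the symmetric conflict by the second. Finally, ancestor-descendant preservation of the Tai mapping is vacuous here, since neither projection of $V(M)$ contains two comparable elements, so $x < x' \Leftrightarrow y < y'$ holds trivially on both sides. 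Hence $M \in \M^*(T_1(x), T_2(y))$ with objective value $W(M) + w_{x, y}$.

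The step that needs the most care is observing that the two path-sum constraints simultaneously enforce the antichain condition and the one-to-one condition, and that once the antichain property is in place the Tai condition collapses to one-to-one alone (there is no ancestor-descendant pair on either side left to constrain). With this in hand, the IP optimum equals $\max_{M \in \M^*(T_1(x), T_2(y))} W(M) + w_{x, y}$, which by Lemma~\ref{lem:key} is exactly $W_{x, y}$.
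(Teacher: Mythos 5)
Your proposal is correct and follows essentially the same route as the paper's proof: establish that the feasible $0$--$1$ points correspond exactly to the mappings in $\M^*(T_1(x), T_2(y))$ (antichain and one-to-one conditions both enforced by the two families of path-sum constraints), then invoke Lemma~\ref{lem:key}. Your explicit remark that the ancestor-descendant preservation condition becomes vacuous once both projections are antichains is a small point the paper leaves implicit, but it is not a different argument.
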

\begin{proof}
By Lemma~\ref{lem:key}, it suffices to prove that $M = \{ (x', y') : x' \in T_1(x), y' \in T_2(y), m_{x', y'} = 1\}$ is in $M^*(T_1(x), T_2(y))$
if and only if $m_{*,*}$ is a feasible solution.

Suppose first that $M \in M^*(T_1(x), T_2(y))$.
Since $T_1(x) \cap V(M)$ forms an antichain in $(T_1, \le)$, $M$ has at most one node in $P_{xl}^{T_1}$ for each $l \in L(T_1(x))$. 
Therefore, binary variables $m_{x', y'}$ do not violate the first type constraints.
A symmetric argument for $T_2(y) \cap V(M)$ implies that $m_{*,*}$ is a feasible solution for the IP.

Suppose, for contradiction, $m_{*,*}$ is a feasible solution and there are $(x', y'), (x'', y'')$ in $M$ that violate the condition of $M^*(T_1(x), T_2(y))$.
There are two possibilities: $(x', y')$ and $(x'', y'')$ violate the one-to-one correspondence of Tai mapping or
at least one of $x' < x''$ or $y' < y''$ holds.
For the former case, assume without loss of generality that $x' = x''$ and $y' \neq y''$.
In this case, the pairs contribute at least two to a constraint for each $l \in T_1(x')$, which contradict the feasibility of $m_{*,*}$.
For the latter case, assume without loss of generality that $x' < x''$.
In this case, there is a path $P^{T_1}_{xl} - x$ that contains both $x'$ and $x''$.
The pairs contribute at least two to a constraint for such $l \in L(T_1(x))$ , which also contradict the feasibility of $m_{*,*}$.
Therefore, the lemma holds.
\qed\end{proof}

For $x \in T_1$ and $y \in T_2$, we can compute $W_{x, y}$ by using the formulation of Lemma~\ref{lem:ip}.
The remaining task is to compute $\Ded(T_1, T_2)$ from the values $W_{x, y}$.
\begin{theorem}\label{thm:common-ip}
Let $opt$ be the optimal value of the following IP.
Then, $\Ded(T_1, T_2) = \displaystyle\sum_{x \in T_1}d(x, \varepsilon) + \sum_{y \in T_2} d(\varepsilon, y) - opt$.
\begin{eqnarray*}
\begin{array}{lll}
\text{maximize} & \displaystyle\sum_{x \in T_1, y \in T_2}W_{x, y}m_{x, y}&\\
\text{subject to} & \displaystyle\sum_{x \in P^{T_1}_{r(T_1)l}}\sum_{y \in T_2}m_{x, y} \le 1 & \text{ for all } l \in L(T_1)\\
& \displaystyle\sum_{y \in P^{T_2}_{r(T_2)l}}\sum_{x \in T_1}m_{x, y} \le 1 & \text{ for all } l \in L(T_2)\\
& m_{x, y} \in \{0, 1\}& \text{ for all } x \in T_1, y \in T_2.
\end{array}
\end{eqnarray*}
\end{theorem}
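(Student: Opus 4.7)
The plan is to reduce the theorem to two facts that are already essentially in place: the objective rewrite from Section~\ref{sec:previous} and the antichain encoding of Lemma~\ref{lem:ip}. First, I would apply the cost reformulation
\begin{equation*}
cost(M) = \sum_{x\in T_1}d(x,\varepsilon) + \sum_{y\in T_2}d(\varepsilon,y) - \sum_{(x,y)\in T_1\times T_2}w_{x,y}m_{x,y},
\end{equation*}
which immediately gives $\Ded(T_1,T_2) = \sum_{x\in T_1}d(x,\varepsilon)+\sum_{y\in T_2}d(\varepsilon,y) - \max_{M\in\Mtai(T_1,T_2)} w(M)$. So the theorem reduces to showing $\max_{M\in\Mtai(T_1,T_2)} w(M) = opt$.

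Next I would establish a root-level analogue of Lemma~\ref{lem:key}: every Tai mapping $M$ between $T_1$ and $T_2$ decomposes uniquely into submappings $M_{x_1,y_1},\ldots,M_{x_k,y_k}$ seeded at the $\le$-minimal pairs $(x_i,y_i)\in M$, which by the Tai conditions automatically form antichains in both $T_1$ and $T_2$. By optimality of each $W_{x_i,y_i}$, one direction gives $w(M)=\sum_i w(M_{x_i,y_i}) \le \sum_i W_{x_i,y_i}$. Conversely, for any selection $S=\{(x_i,y_i)\}$ of pairs forming antichains in both coordinates, the disjoint union of the optimal per-subtree mappings realizing $W_{x_i,y_i}$ is itself a Tai mapping of weight $\sum_{(x,y)\in S}W_{x,y}$. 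Therefore $\max_{M\in\Mtai(T_1,T_2)}w(M)$ equals the maximum of $\sum_{(x,y)\in S}W_{x,y}$ over all such antichain selections $S$.

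Finally, I would show that the IP in the theorem encodes exactly this antichain selection problem, and here the argument of Lemma~\ref{lem:ip} transfers almost verbatim after replacing the subtree paths $P^{T_1}_{xl}, P^{T_2}_{yl}$ with the root-to-leaf paths $P^{T_1}_{r(T_1)l}, P^{T_2}_{r(T_2)l}$: a feasible 0/1 solution picks at most one node on each root-to-leaf path in each tree, which enforces both one-to-one correspondence (the $x'=x''$, $y'\ne y''$ case is blocked on any leaf below $x'$) and the antichain condition (the $x'<x''$ case is blocked on any leaf below $x''$), and conversely any antichain selection trivially satisfies the path constraints. Combining this equivalence with the decomposition step yields $opt=\max_M w(M)$, which plugged back into the cost reformulation gives the claimed identity.

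The main obstacle I expect is the clean handling of the decomposition step: verifying that peeling off the $\le$-minimal pairs of $M$ really does produce pairs that are antichains in \emph{both} coordinates simultaneously and that the residual submappings live in pairwise disjoint subtree pairs. Once that is in place, the path-constraint argument from Lemma~\ref{lem:ip} and the objective rewrite glue together mechanically to finish the proof.
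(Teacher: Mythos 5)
Your proposal is correct and follows exactly the route the paper intends: the paper's own ``proof'' is the single remark that the argument is analogous to Lemma~\ref{lem:key} and Lemma~\ref{lem:ip}, and your write-up is precisely that analogy spelled out --- the cost rewrite, the decomposition of an arbitrary Tai mapping at its $\le$-minimal pairs (which form antichains in both coordinates by the ancestor-preservation condition), and the transfer of the path-constraint equivalence to root-to-leaf paths. The decomposition detail you flag as the main obstacle goes through for the same reason as in Lemma~\ref{lem:key}, so there is no gap.
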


The proof of theorem~\ref{thm:common-ip} is analogous to those of Lemma~\ref{lem:key} and \ref{lem:ip}.
Our method has $O(nm)$ subproblems. Each subproblem, however, contains $O(nm)$ variables and only $O(n + m)$ constraints.

\subsection{Improved Methods for Variants of Edit Distance}
\begin{figure}[htb]
\includegraphics[width=\textwidth]{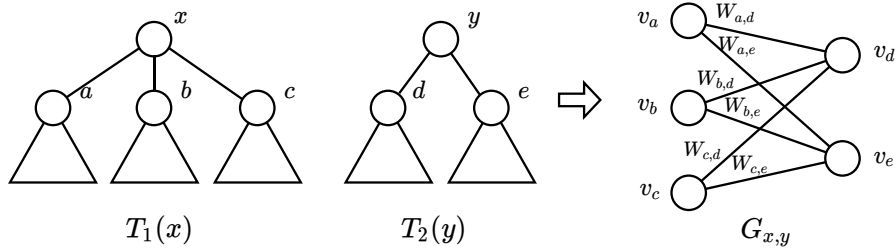}
\caption{The figure illustrates the reduction from the maximum segmental mapping problem to the maximum matching problem in a bipartite graph.}
\label{fig:reduction}
\end{figure}
As the edit distance was computed in the previous section, the other distances can also be computed in the same manner: For each $x \in T_1$ amd $y \in T_2$, compute $W_{x, y}$, and
then combine the solutions $W_{x, y}$ of subproblems as in Theorem~\ref{thm:common-ip}.

\subsubsection{Segmental Distance}
Let $x$ and $y$ be nodes of two trees $T_1$ and $T_2$, respectively. 
We denote here by $W_{x, y}$ the maximum weight, that is the maximum value of (\ref{eq:obj}), of segmental mappings $M_{x, y}$ between $T_1(x)$ and $T_2(y)$ with $(x, y) \in M_{x, y}$.
If either $x$ or $y$ is a leaf, we have $W_{x, y} = w_{x, y}$.
Thus, we suppose otherwise.
Suppose $W_{x',y'}$ have already computed for each $(x', y') \in (T_1(x) \times T_2(y)) \setminus \{(x, y)\}$. 
Observe that for any segmental mapping $M_{x, y}$ with $(x, y) \in M_{x, y}$, a child of $x$ must be paired with a child of $y$ in $M_{x, y}$.
Moreover, if a descendant $x'$ of $x$ that is not a child of $x$ is in $V(M_{x, y})$, the child of $x$ that is an ancestor of $x'$ must be in $V(M_{x, y})$.
These observations imply that  $M_{x, y}$ can be constructed by a union of mappings $M_{x', y'}$ for $x' \in C(x)$ and $y' \in C(y)$,
where $M_{x', y'}$ is a mapping between $T_1(x')$ and $T_2(y')$ with $(x', y') \in M_{x', y'}$. 
Therefore, in order to compute $W_{x, y}$, we construct a bipartite graph $G_{x, y}$ as follows.
For each $z \in C(x) \cup C(y)$, we create a vertex $v_z$ and for each $x' \in C(x)$ and $y' \in C(y)$, add an edge between $v_{x'}$ and $v_{y'}$ whose weight equals $W_{x', y'}$ as in Fig.~\ref{fig:reduction}.
It is well-known that a maximum weight bipartite matching can be solved in polynomial time using Hungarian method~\cite{hungarian}. 

When $W_{x, y}$ is computed for each $x \in T_1$ and $y \in T_2$, we can compute the segmental distance between $T_1$ and $T_2$ by Theorem~\ref{thm:common-ip}.

\subsubsection{Bottom-up Segmental Distance}
Because any bottom-up segmental mapping is a segmental mapping, the above observations also hold and each subproblem can be reduced to a maximum weight matching problem in a bipartite graph as well.
The only difference from the case of segmental distance is that every segment must include at least one leaf. 
To this end, we need to exclude the following two cases from our solution.
If exactly one of $x$ and $y$ is a leaf, then $W_{x, y}$ must be zero since $(x, y)$ violates the condition of bottom-up segmental mapping.
The other case is that neither $x$ nor $y$ is a leaf and the solution of the maximum weight matching equals zero.
This implies that an optimal mapping between $T_1(x)$ and $T_2(y)$ consists of a single pair $(x, y)$, which also violates the condition of bottom-up segmental mapping.
Therefore, we set $W_{x, y} = 0$ in this case.

\subsubsection{Bottom-up Distance}
First, we propose a naive IP formulation for computing bottom-up distance. A straightforward implication from Definition~\ref{def:bot} is that if $(x, y) \in M$, the mapping 
between $C(x)$ and $C(y)$ must be a bijection.
The formulation can be obtained from that of Tai mapping by adding the following constraints:
\begin{eqnarray*}
\begin{array}{l}\displaystyle
m_{x, y}\le \sum_{y' \in C(y)} m_{x', y'}
\text{ for all }(x,y)\in T_1\times T_2 \text{ and for all } x' \in C(x),\\
\displaystyle m_{x, y}\le \sum_{x' \in C(x)} m_{x', y'}
\text{ for all }(x,y)\in T_1\times T_2 \text{ and for all } y' \in C(y).
\end{array}
\end{eqnarray*}
This formulation contains $O(nm)$ variables and $O(n^2m^2)$ constraints.

Since bottom-up mapping is a subclass of bottom-up segmental mapping, we can apply the above technique as well. 
All we have to do is to consider the case when two trees $T_1(x)$ and $T_2(y)$ are structurally isomorphic.
Thus, for $x \in T_1$ and $y\in T_2$, we set $W_{x,y} = 0$ if two subtrees $T_1(x)$ and $T_2(y)$ are not structurally isomorphic, i.e., they are isomorphic ignoring the labels. 

Our improved methods contain $O(nm)$ subproblems which can be solved in polynomial time.
For combining the solutions of these subproblems, we need to solve an integer program in Theorem~\ref{thm:common-ip}.
Such IPs also have $O(nm)$ variables and $O(n + m)$ constraints.

\section{Experiments}\label{sec:experiment}
To compare the experimental performance of our methods and the previous methods, we applied them to real tree-structured data. We used glycan data obtained from KEGG/Glycan database \cite{kegg} and CSLOGS dataset \cite{zaki} which consists of web log files.
In our experiments, we adopt the {\em unit cost} for the cost function, which is defined as:
\begin{equation*}
d(x, y) =
\left\{
\begin{array}{ll}
    0 & \text{ if } l_1(x) = l_2(y)\\
    1 & \text{ otherwise }
\end{array}
\right..
\end{equation*}

We implemented the previous methods for computing edit distance (IP\_Edit), segmental distance (IP\_Sg), and bottom-up segmental distance (IP\_BotSg) given by Kondo \etal \cite{kondo} and
a naive method for computing bottom-up distance (IP\_Bot) described in the previous section.
We also implemented our methods for computing these four distances (DpIP\_Edit, DpIP\_Sg, DpIP\_BotSg, and DpIP\_Bot).
In addition to the above implementations, we intended to compare our methods with the algorithm due to Mori \etal \cite{mori}.
Their algorithm reduces the tree edit distance problem to the maximum weight clique problem and uses the maximum weight clique algorithm due to \cite{tomita}.
However, the purpose of our experiments is to compare formulations or reductions rather than the performance of specific IP or other solvers.
Therefore, we used an ordinary IP formulation of the maximum weight clique problem instead of the algorithm of \cite{tomita}, which is denoted by IP\_DpClique\_E.

We implemented the methods mentioned above in Java 1.8 combined with IBM ILOG CPLEX 12.7.
We have forced CPLEX to run in sequential mode, setting parameter {\tt IloCplex.IntParam.Threads} to one. Every implementation of the presented methods is also single-threaded. 
The experiments were performed using a computer with 3.7 GHz Quad-Core Intel Xeon E5 and 32 GB RAM, under the Mac OS X.

\subsection{Glycan dataset}

The results for edit distance with Glycan dataset are shown in Table \ref{glycanted}. ``\# of nodes'' in the table means the total number of nodes of two input trees. 
We randomly selected at most 100 input tree pairs from the Glycan dataset for each range of total nomber of nodes.
Avg and t.o. stand for average execution time (in seconds) and the number of instances timed out, respectively. The table shows that DpIP\_Edit is much faster than IP\_Edit. IP\_DpClique\_E is not faster than IP\_Edit when the size of inputs are large, while IP\_DpClique\_E outperforms IP\_Edit when the inputs are small-sized trees. It is shown that DpIP\_Edit also outperforms IP\_DpClique\_E. It implies that it is not sufficient to adopt a dynamic programming aproach for improving on the practical performance, and the revised IP formulation derived from the dynamic programming is of great importance for reducing the running time on the tree edit distance problem. 

Table \ref{glycanseg} shows the results for the variants of edit distance. For segmental distance and bottom-up segmental distance, the proposed methods (DpIP\_Sg and DpIP\_BotSg) finished computing within 1 second while the naive methods (IP\_Sg and IP\_BotSg) take longer than 30 seconds if the total size of input trees is large. For bottom-up distance, the naive method (IP\_Bot) was fast as all instances were computed within 30 seconds. However, our improved method (DpIP\_Bot) is still much faster than the naive method.
\begin{table}[H]
\setlength{\tabcolsep}{3pt}
\footnotesize
\centering
\caption{Experimental results with Glycan for edit distance}
\begin{tabular}{c@{\quad}r@{\quad}r@{\quad}r@{\quad}r@{\quad}r@{\quad}r@{\quad}r}
\toprule
\multirow{2}{*}{\# of nodes}&\multirow{2}{*}{\# of instances}&\multicolumn{2}{c}{IP\_Edit}&\multicolumn{2}{c}{DpIP\_Edit}&\multicolumn{2}{c}{IP\_DpClique\_E}\\
&&avg&t.o.&avg&t.o.&avg&t.o.\\
\midrule
50 - 54&100&2.393&0&0.308&0&0.994&0\\
55 - 59&100&4.661&0&0.417&0&1.576&0\\
60 - 64&88&11.661&6&0.576&0&2.894&0\\
65 - 69&36&17.774&4&0.669&0&3.433&0\\
70 - 74&100&13.209&7&0.654&0&11.799&7\\
75 - 79&29&20.771&9&0.823&0&11.411&7\\
80 - 84&9&18.705&8&1.094&0&14.941&6\\
85 - 89&5&0&5&1.330&0&21.838&3\\
90 - 94&4&0&4&1.442&0&0&4\\
\bottomrule
\end{tabular}
\label{glycanted}
\end{table}

\begin{table}[H]
\setlength{\tabcolsep}{3pt}
\footnotesize
\centering
\caption{Experimental results with Glycan for segmental distance, bottom-up segmental distance, and bottom-up distance}
\begin{tabularx}{\textwidth}{crrr@{\quad}rr@{\quad}rr@{\quad}rr@{\quad}rr@{\quad}rr}
\toprule
\multirow{2}{*}{\# of nodes}&\multirow{2}{*}{\begin{minipage}{0.45in}\# of \\instances\end{minipage}}&\multicolumn{2}{c}{IP\_Sg}&\multicolumn{2}{c}{DpIP\_Sg}&\multicolumn{2}{c}{IP\_BotSg}&\multicolumn{2}{c}{DpIP\_BotSg}&\multicolumn{2}{c}{IP\_Bot}&\multicolumn{2}{c}{DpIP\_Bot}\\
&&\multicolumn{1}{c}{avg}&t.o.&\multicolumn{1}{c}{avg}&t.o.&\multicolumn{1}{c}{avg}&t.o.&\multicolumn{1}{c}{avg}&t.o.&\multicolumn{1}{c}{avg}&t.o.&\multicolumn{1}{c}{avg}&t.o.\\
\midrule
50 - 54&100&5.306&0&0.135&0&1.545&0&0.136&0&0.569&0&0.131&0\\
55 - 59&100&9.070&5&0.135&0&2.539&0&0.139&0&0.785&0&0.131&0\\
60 - 64&88&13.983&41&0.137&0&4.767&0&0.142&0&1.258&0&0.132&0\\
65 - 69&36&23.813&27&0.140&0&6.219&0&0.147&0&1.544&0&0.133&0\\
70 - 74&100&20.408&97&0.145&0&10.252&4&0.150&0&1.453&0&0.134&0\\
75 - 79&29&21.274&27&0.148&0&12.794&5&0.154&0&2.021&0&0.137&0\\
80 - 84&9&0&9&0.152&0&17.606&3&0.160&0&3.002&0&0.137&0\\
85 - 89&5&0&5&0.157&0&29.157&4&0.163&0&3.869&0&0.142&0\\
90 - 94&4&0&4&0.161&0&0&4&0.166&0&4.476&0&0.145&0\\
\bottomrule
\end{tabularx}
\label{glycanseg}
\end{table}
\subsection{CSLOGS Dataset}
We divided CSLOGS dataset into two subsets: SUBLOG3 and SUBLOG49. Every tree in SUBLOG3 (resp. SUBLOG49) is restricted to have the maximum degree at most 3 (resp. 49). We randomly selected at most 100 pairs from each dataset with a specified range of the total number of nodes.

The results of computation for SUBLOG3 are shown in Table \ref{sub3ted} and \ref{sub3seg}. Table \ref{sub49ted} and \ref{sub49seg} shows the results for SUBLOG49. Compared to the results in SUBLOG3, the naive methods (IP\_Edit, IP\_Sg, IP\_BotSg, and IP\_Bot) in SUBLOG49 works faster. 
This property is what has been observed in the previous work by Konto et al. In regard of IP\_DpClique\_E, it outperforms IP\_Edit when the degrees of trees are small, though their performances are scarcely different with high-degree inputs.
\begin{table}[H]
\setlength{\tabcolsep}{3pt}
\footnotesize
\centering
\caption{Experimental results with SUBLOG3 for edit distance}
\begin{tabular}{c@{\quad}r@{\quad}r@{\quad}r@{\quad}r@{\quad}r@{\quad}r@{\quad}r}
\toprule
\multirow{2}{*}{\# of nodes}&\multirow{2}{*}{\# of instances}&\multicolumn{2}{c}{IP\_Edit}&\multicolumn{2}{c}{DpIP\_Edit}&\multicolumn{2}{c}{IP\_DpClique\_E}\\
&&avg&t.o.&avg&t.o.&avg&t.o.\\
\midrule
50 - 54&100&2.478&0&0.435&0&3.853&0\\
55 - 59&100&3.892&0&0.510&0&5.393&2\\
60 - 64&100&6.641&0&0.633&0&8.243&17\\
65 - 69&100&9.921&1&0.760&0&7.191&34\\
70 - 74&100&15.077&9&0.917&0&8.244&44\\
75 - 79&100&16.534&29&1.112&0&6.352&47\\
80 - 84&100&19.024&45&1.247&0&5.144&44\\
85 - 89&100&21.249&70&1.449&0&4.711&48\\
90 - 94&100&23.946&91&1.872&0&6.863&59\\
95 - 99&100&26.599&92&2.136&0&7.971&61\\
\bottomrule
\end{tabular}
\label{sub3ted}
\end{table}
\begin{table}[H]
\setlength{\tabcolsep}{3pt}
\footnotesize
\centering
\caption{Experimental results with SUBLOG3 for segmental distance, bottom-up segmental distance and bottom-up distance}
\begin{tabularx}{\textwidth}{crrr@{\quad}rr@{\quad}rr@{\quad}rr@{\quad}rr@{\quad}rr}
\toprule
\multirow{2}{*}{\# of nodes}&\multirow{2}{*}{\begin{minipage}{0.45in}\# of \\instances\end{minipage}}&\multicolumn{2}{c}{IP\_Sg}&\multicolumn{2}{c}{DpIP\_Sg}&\multicolumn{2}{c}{IP\_BotSg}&\multicolumn{2}{c}{DpIP\_BotSg}&\multicolumn{2}{c}{IP\_Bot}&\multicolumn{2}{c}{DpIP\_Bot}\\
&&\multicolumn{1}{c}{avg}&t.o.&\multicolumn{1}{c}{avg}&t.o.&\multicolumn{1}{c}{avg}&t.o.&\multicolumn{1}{c}{avg}&t.o.&\multicolumn{1}{c}{avg}&t.o.&\multicolumn{1}{c}{avg}&t.o.\\
\midrule
50 - 54&100&5.978&0&0.136&0&1.970&0&0.140&0&0.568&0&0.131&0\\
55 - 59&100&10.208&7&0.136&0&2.922&0&0.141&0&0.764&0&0.132&0\\
60 - 64&100&13.791&31&0.141&0&5.245&0&0.145&0&1.076&0&0.134&0\\
65 - 69&100&18.372&57&0.144&0&6.562&1&0.148&0&1.390&0&0.135&0\\
70 - 74&100&20.195&75&0.146&0&8.513&15&0.151&0&1.856&0&0.137&0\\
75 - 79&100&22.485&87&0.149&0&11.003&10&0.154&0&2.372&0&0.138&0\\
80 - 84&100&22.865&91&0.150&0&12.489&18&0.157&0&3.031&0&0.139&0\\
85 - 89&100&26.028&94&0.154&0&14.864&25&0.160&0&3.746&0&0.140&0\\
90 - 94&100&26.866&98&0.158&0&17.244&48&0.167&0&4.861&0&0.144&0\\
95 - 99&100&0&100&0.160&0&18.644&57&0.170&0&5.808&0&0.147&0\\
\bottomrule
\end{tabularx}
\label{sub3seg}
\end{table}
\begin{table}[H]
\setlength{\tabcolsep}{3pt}
\footnotesize
\centering
\caption{Experimental results with SUBLOG49 for edit distance}
\begin{tabular}{c@{\quad}r@{\quad}r@{\quad}r@{\quad}r@{\quad}r@{\quad}r@{\quad}r}
\toprule
\multirow{2}{*}{\# of nodes}&\multirow{2}{*}{\# of instances}&\multicolumn{2}{c}{IP\_Edit}&\multicolumn{2}{c}{DpIP\_Edit}&\multicolumn{2}{c}{IP\_DpClique\_E}\\
&&avg&t.o.&avg&t.o.&avg&t.o.\\
\midrule
50 - 54&100&1.275&0&0.263&0&1.643&0\\
55 - 59&100&2.323&0&0.317&0&3.014&0\\
60 - 64&100&4.032&0&0.395&0&5.452&3\\
65 - 69&100&4.756&0&0.402&0&6.721&6\\
70 - 74&100&6.231&1&0.450&0&7.188&10\\
75 - 79&100&8.808&10&0.567&0&9.787&19\\
80 - 84&100&11.850&6&0.583&0&10.037&28\\
85 - 89&100&12.429&21&0.665&0&10.145&34\\
90 - 94&100&13.595&33&0.678&0&11.228&34\\
95 - 99&100&15.711&30&0.829&0&12.084&39\\
\bottomrule
\end{tabular}
\label{sub49ted}
\end{table}
\begin{table}[H]
\setlength{\tabcolsep}{3pt}
\footnotesize
\centering
\caption{Experimental results with SUBLOG49 for segmental distance, bottom-up segmental distance and bottom-up distance}
\begin{tabularx}{\textwidth}{crrr@{\quad}rr@{\quad}rr@{\quad}rr@{\quad}rr@{\quad}rr}
\toprule
\multirow{2}{*}{\# of nodes}&\multirow{2}{*}{\begin{minipage}{0.45in}\# of \\instances\end{minipage}}&\multicolumn{2}{c}{IP\_Sg}&\multicolumn{2}{c}{DpIP\_Sg}&\multicolumn{2}{c}{IP\_BotSg}&\multicolumn{2}{c}{DpIP\_BotSg}&\multicolumn{2}{c}{IP\_Bot}&\multicolumn{2}{c}{DpIP\_Bot}\\
&&\multicolumn{1}{c}{avg}&t.o.&\multicolumn{1}{c}{avg}&t.o.&\multicolumn{1}{c}{avg}&t.o.&\multicolumn{1}{c}{avg}&t.o.&\multicolumn{1}{c}{avg}&t.o.&\multicolumn{1}{c}{avg}&t.o.\\
\midrule
50 - 54&100&2.130&0&0.143&0&0.739&0&0.142&0&0.376&0&0.130&0\\
55 - 59&100&4.704&0&0.147&0&1.521&0&0.145&0&0.514&0&0.133&0\\
60 - 64&100&6.795&11&0.151&0&2.863&3&0.150&0&0.707&0&0.153&0\\
65 - 69&100&7.741&8&0.162&0&2.544&1&0.154&0&0.830&0&0.135&0\\
70 - 74&100&9.277&19&0.158&0&3.257&2&0.159&0&1.036&0&0.139&0\\
75 - 79&100&12.421&38&0.162&0&5.143&6&0.162&0&1.376&0&0.139&0\\
80 - 84&100&12.707&39&0.167&0&5.788&7&0.169&0&1.644&0&0.142&0\\
85 - 89&100&14.817&46&0.170&0&7.136&3&0.176&0&2.129&0&0.144&0\\
90 - 94&100&13.267&65&0.175&0&8.479&8&0.179&0&2.361&0&0.147&0\\
95 - 99&100&16.752&65&0.181&0&8.776&16&0.184&0&2.881&0&0.148&0\\
\bottomrule
\end{tabularx}
\label{sub49seg}
\end{table}
We can observe that the proposed methods (DpIP\_Edit, DpIP\_Sg, DpIP\_BotSg, and DpIP\_Bot) show remarkably improved the previous methods (IP\_Edit, IP\_Sg, IP\_BotSg, and IP\_Bot) as most of instances are computed within 0.2 seconds. In order to measure the scalability of the proposed methods, we used the wide range of dataset. We selected input tree pairs so that the number of total nodes ranges from around 0 to around 850. The results are shown in Fig. \ref{plot}. For segmemtanl distance and bottom-up segmental distance, the smallest instance which exceeds our time limit of 30 seconds appears when the total number of nodes belongs to range 450 - 500 whereas it appears for tree edit distance when the number of nodes belongs to range 150 - 200. For bottom-up distance, all instances selected in this experiments are solved within 7 seconds.
\begin{figure}[H]
\includegraphics[width=\textwidth]{./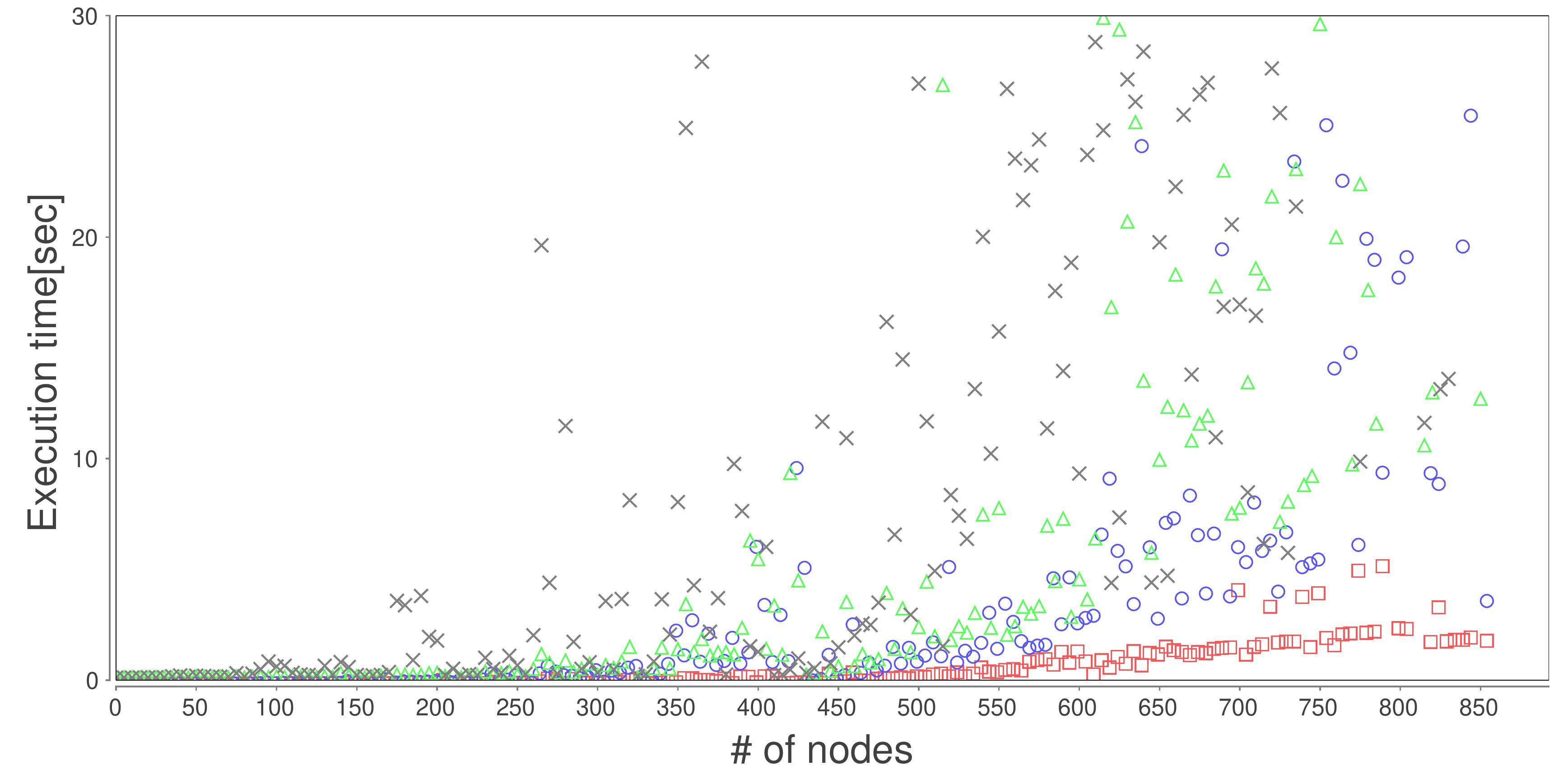}
\caption{The crosses, triangles, circles and squares represent the instances of the edit distance, segmental distance, bottom-up distance, and bottom-up distance problem, respectively.}
\label{plot}
\end{figure}
\section{Conclusion and Discussion}
We have proposed improved methods for computing the tree edit distance and its variants. While the naive IP formulation proposed by Kondo et al.~\cite{kondo} has $O(n^2m^2)$ constraints, our efficient IP formulation, though it has $O(nm)$ subproblems, only has $O(n+m)$ constraints. In case of segmental distance, bottom-up segmental distance and bottom-up distance, each subproblem, except for the problem combining the solutions of subproblems, can be reduced to the maximum weighted matching problem in a bipartite graph, which can be solved in polynomial time.

We performed some experiments using real tree-structured dataset. While the previous method only works for small-sized trees, our methods are still effective for large-sized trees.
In particular, for segmental distance and bottom-up segmental distance, our methods are available for trees whose total size is up to 450, and for bottom-up distance, every instance is solved within 7 seconds.

An advantage of IP-based method is that we can easily give an IP fomulation for another distance by adding some constraints to the IP formulation for edit distance. 
Therefore, extending our method to another important distance measure between unordered trees such as tree alignment distance \cite{alignment} would be our future work.
It would be interesting to develop practical algorithms for computing those distances without using general purpose solvers such as IP solvers or SAT solvers.

\bibliographystyle{splncs}

\end{document}